\newtheorem{theorem}{Theorem}
\newtheorem{lemma}[theorem]{Lemma}
\newdefinition{remark}{Remark}
\newproof{proof}{Proof}
\newproof{pot}{Proof of Theorem \ref{thm2}}
\def\WVP{\mbox{\it{WVP}}}
\def\PWVP{\mbox{\it{PWVP}}}
\def\WV{\mbox{\it{WV}}}
\def\SPT{\mbox{\it{SPT}}}
\def\SP{\mbox{\it{SP}}}
\def\VP{\mbox{\it{VP}}}
\def\P{\mbox{\it{P}}}
\def\ps@pprintTitle{%
  \let\@oddhead\@empty
  \let\@evenhead\@empty
  \def\@oddfoot{\reset@font\hfil\thepage\hfil}
  \let\@evenfoot\@oddfoot
}
\begin{document}
\title{Near Optimal Line Segment Weak Visibility Queries in Simple Polygons}

\author[add1]{Mojtaba Nouri Bygi}
\ead{nouribygi@ce.sharif.edu}
\author[add1,add2]{Mohammad Ghodsi}
\ead{ghodsi@sharif.edu}

\address[add1]{Computer Engineering Department, Sharif University of Technology, Iran}
\address[add2]{School of Computer Science, Institute for Research in Fundamental Sciences (IPM), Iran}

  \begin{abstract}
This paper considers the problem of computing 
the weak visibility polygon ($\WVP$) of any query line segment $pq$ (or $\WVP(pq)$) inside a given simple
polygon $\P$. We present an algorithm that preprocesses $\P$ and creates a data
structure from which  $\WVP(pq)$ is  efficiently reported in an output
sensitive manner. 

Our algorithm needs  $O(n^2 \log n)$ time and $O(n^2)$ space in the preprocessing phase to 
report  $\WVP(pq)$ of any query line segment $pq$ in time $O(\log^2 n + |\WVP(pq)|)$. 
We improve the preprocessing time and space of current results for this problem~\cite{nouri,chen2}
at the expense of more query time.
  \end{abstract}
  \begin{keyword}
Computational Geometry, Visibility, Line Segment Visibility
  \end{keyword}

  \maketitle

\section{Introduction \label{sec:intro}}

Two points inside a polygon $\P$ are {\em visible} to each other if their connecting segment remains
completely inside $\P$. The {\em visibility polygon} ($\VP$) of a point
$q$ inside $\P$ (or $\VP(q)$) is the set of vertices of $\P$ that are visible from $q$. There
have been many studies on computing $\VP$'s in simple polygons.
In a simple polygon $\P$ with $n$ vertices, $\VP(q)$ can be reported in time $O(\log n + |\VP(q)|)$ by
spending $O(n^3 \log n)$ time and $O(n^3)$ of preprocessing space \cite{bose}. 
This result was later improved by \cite{aronov} where the preprocessing 
time and space were reduced to $O(n^2 \log n)$ and $O(n^2)$ respectively, at the expense of more 
query time of $O(\log^2 n + |\VP(q)|)$.

The visibility problem has also been considered for line segments.
A point $v$ is said to be {\em weakly visible} from a line segment $pq$ if there exists
a point $w \in pq$  such that $w$ and $v$ are visible to each other.
The problem of computing  the {\em weak visibility polygon} of $pq$ (or $\WVP(pq)$) inside  $\P$ is 
to compute all points of $\P$ that are weakly visible from $pq$.
If $\P$ is simple (with no holes), Chazelle and Guibas \cite{Chazelle} gave an $O(n \log n)$ 
time algorithm for this problem. Guibas {\em et al.}\ \cite{guibas}
showed that this problem can be solved in $O(n)$ time if a triangulation of $\P$
is given along with $\P$. Since any $\P$ can be triangulated in $O(n)$ \cite{Chazelle2}, 
the algorithm  of Guibas {\em et al.}\ always runs in $O(n)$ time \cite{guibas}. Another 
linear time solution was obtained independently by \cite{toussainta}.

The $\WV$ problem in the query version has been considered by  few.
It was shown in \cite{bose} that a simple polygon $\P$ can be preprocessed 
in $O(n^3 \log n)$ time and $O(n^3)$ space
such that, given an arbitrary query line segment inside $\P$, 
$O(k \log n)$ time is required to recover $k$ weakly visible vertices.
This result was later improved by \cite{aronov} in which the preprocessing time and space were 
reduced to $O(n^2 \log n)$ and $O(n^2)$ respectively, at expense of more query 
time of $O(k \log^2 n )$.
In a recent work, we presented an algorithm to report $\WVP(pq)$
of any $pq$ in $O(\log n + |\WVP(pq)|)$ time by spending $O(n^3 \log n)$
time and  $O(n^3)$ space for preprocessing~\cite{nouri}.
Later, Chen and Wang considered the same problem and, by improving the 
preprocessing time of the visibility algorithm of Bose {\em et al.} \cite{bose},
they improved the preprocessing time to $O(n^3)$ \cite{chen2}.

In this paper, we show that the $\WVP$ of a line segment
$pq$ can be reported in near optimal time of $O(\log^2 n + |\WVP(pq)|)$, 
after preprocessing the input polygon in time and space of $O(n^2 \log n)$
and $O(n^2)$ respectively. 
Compared to the algorithms in \cite{nouri} and \cite{chen2}, the storage and preprocessing time has one 
fewer linear factor, at expense of more query time of $O(\log^2 n + |\WVP(pq)|)$. 
Our approach is inspired by Aronov {\em et al.} algorithm for
computing the visibility polygon of a point~\cite{aronov}. 
In Section \ref{sec:partial}, we first show how to compute the 
{\em partial weak visibility polygon} 
$\WVP(pq) \cap P'$ when $pq$ is not inside a sub-polygon $P'$ of $P$.
Then, in Section \ref{sec:balanced}, we use a balanced triangulation to compute and 
report the final weak visibility polygon.

\section{Preliminaries} \label{sec:pre}

In this section, we introduce some basic terminologies used throughout the paper.
For a better introduction to these terms, we refer the readers to 
Guibas {\em et al.}\ \cite{guibas}, Bose {\em et al.} \cite{bose}, and Aronov {\em et al.} \cite{aronov}.
For simplicity, we assume that no three vertices of the polygon are collinear.

\subsection{Visibility decomposition} \label{sec:pre:decompos}
Let $\P$ be a simple polygon with $n$ vertices. Also, let $p$ and  $q$ be 
two points inside $\P$.
The {\em visibility sequence} of a point $p$ is 
the sequence of vertices 
and edges of $\P$ that are visible from $p$. 
A {\em visibility decomposition} of $\P$ is to partition $\P$ into a set of 
{\em visibility regions}, such that any point inside each region has the same visibility sequence. 
This partition is induced by 
the {\em critical constraint edges},
which are the lines in the polygon each induced by two vertices of $\P$,
such that the visibility sequences of the points
on its two sides are different.

In a simple polygon, the visibility sequences of two  {\em neighboring} visibility regions which are separated by an edge, 
differ only in one vertex.
This fact is used to reduce the space complexity of maintaining the 
visibility sequences of the regions \cite{bose}.
This is done by defining the {\em sink regions}. 
A sink is a region with the smallest visibility sequence compared to all 
of its adjacent regions.
Therefore, it is sufficient to maintain the visibility sequences of the sinks,
from which the
visibility sequences of all other regions can be computed.
By constructing a directed dual graph over 
the visibility regions (see Figure~\ref{fig:f2}), one can maintain the difference between the visibility sequences of 
the neighboring regions \cite{bose}.

\begin{figure}[h]
  \centering
  \includegraphics[width=.8\columnwidth]{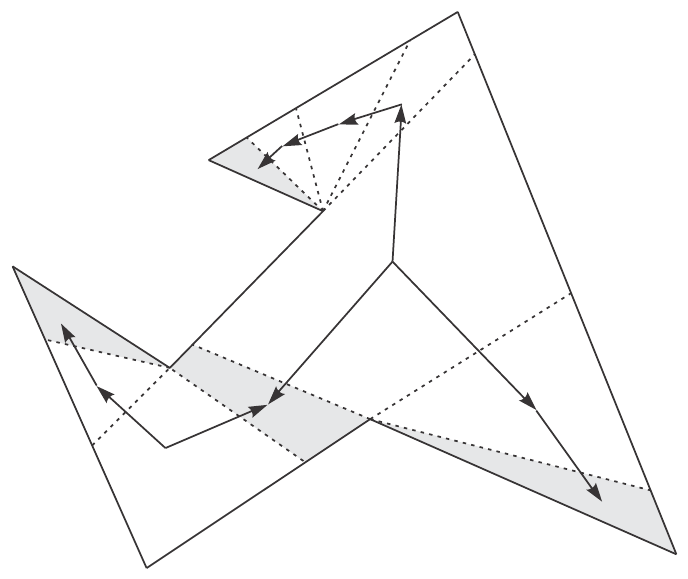}   % this is logo.pdf
  \caption{The visibility decomposition induced by the critical constraint edges and its dual graph . The sink regions are shown in grey.}
  \label{fig:f2}
\end{figure}

In a simple polygon with $n$ vertices, the number of visibility
and sink regions are $O(n^3)$ and $O(n^2)$, respectively~\cite{bose}.

\subsection{A linear time algorithm for computing $\WVP$} \label{sec:guibas}
Here, we present the $O(n)$ time algorithm of Guibas {\em et al.}\ for computing
$\WVP(pq)$ of a line segment $pq$ inside $\P$, 
as described in \cite{ghosh}.
This algorithm is used in computing the partial weak visibility polygons in an output 
sensitive way, to be explained in Section~\ref{sec:partial:wvp}.
For simplicity, we assume that $pq$ is a convex edge of $P$,
but we will show that this can be extended for any line segment in the polygon.

Let $\SPT(p)$ denote the shortest path tree in $\P$ rooted at $p$. 
The algorithm traverses $\SPT(p)$ using a DFS and checks the turn at each vertex
$v_i$ in $\SPT(p)$. If the path makes a right turn at $v_i$, then 
we find the descendant of $v_i$ in the tree with the largest index $j$ (see Figure \ref{fig:f1}).
As there is no vertex between $v_j$ and $v_{j+1}$,
we can compute the intersection point $z$ of $v_jv_{j+1}$ and $v_kv_i$ 
in $O(1)$ time, where $v_k$ is the
parent of $v_i$ in $\SPT(p)$.
Finally the counter-clockwise boundary 
of $\P$ is removed from $v_i$ to $z$ by inserting the segment $v_iz$.

Let $P'$ denote the remaining portion of $\P$. We follow the same procedure for
$q$. This time, the algorithm checks every vertex to see whether the path 
makes its first left turn. If so, we will cut the polygon at that vertex in a similar way. 
After finishing the procedure, 
the remaining portion of $P'$ would be the $\WVP(pq)$.

\begin{figure}[h]
  \centering
  \includegraphics[width=1\columnwidth]{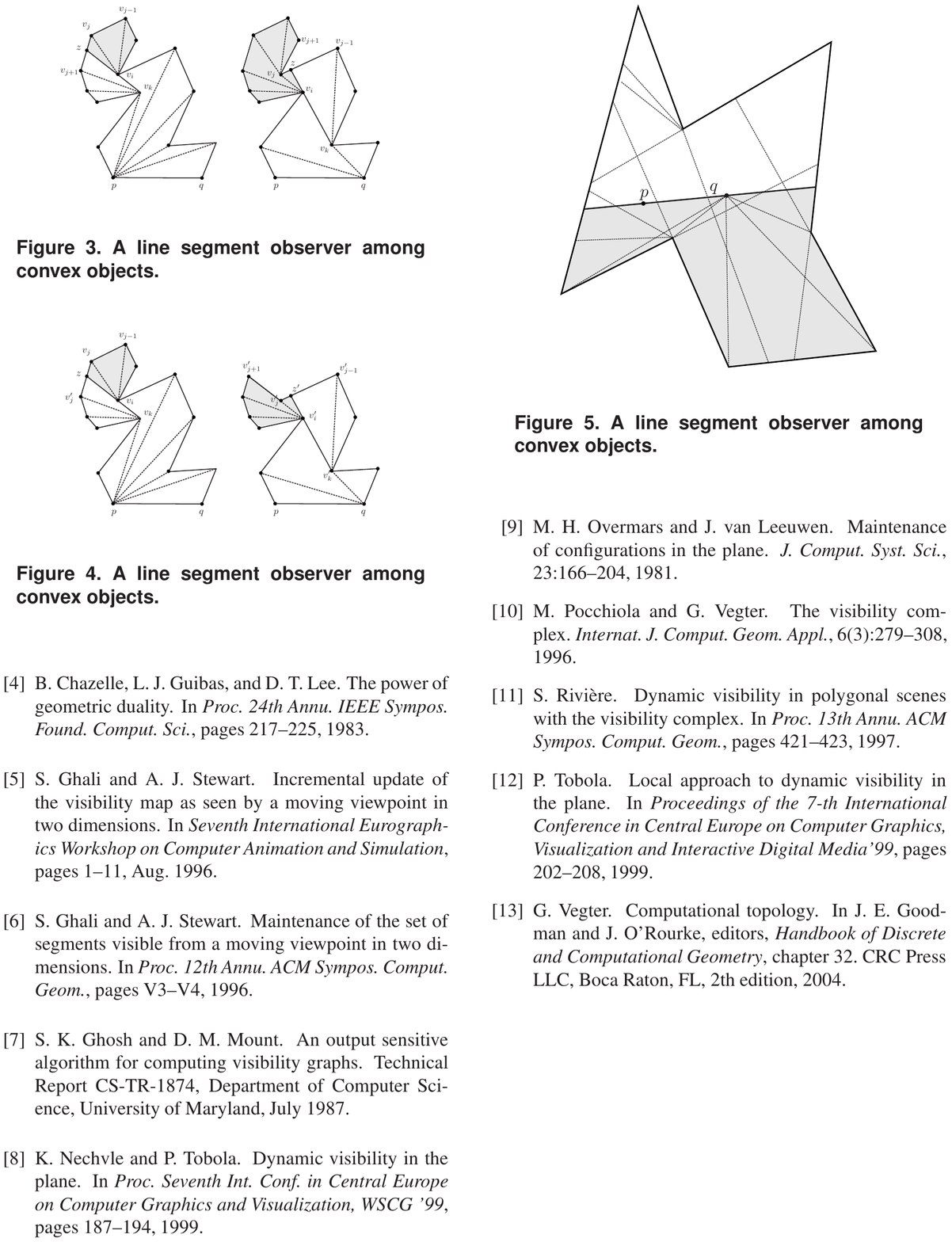}   % this is logo.pdf
  \caption{ The two phases of the algorithm of computing $\WVP(pq)$.
	In the left figure, the shortest path from $p$ to $v_j$ makes a first right turn at $v_i$. In 
	the right figure, the shortest path from $q$ to $v'_j$ makes a first left turn at $v'_i$.}
  \label{fig:f1}
\end{figure}

\section{Computing the partial $\WVP$} \label{sec:partial}
Suppose that a simple polygon $\P$ is divided by a diagonal $e$ into two parts, $L$ and $R$.
For a query line segment $pq \in R$, we define the {\em partial weak visibility polygon}
$\WVP_L(pq)$ (or $\PWVP_L(pq)$ for clarity) to be the polygon $\WVP(pq) \cap L$.
In other words, $\WVP_L(pq)$ is the portion of $\P$ that is weakly visible from $pq$ {\em through} $e$.
In this section, we will show how to compute $\WVP_L(pq)$.
Later in Section \ref{sec:balanced}, we will use this structure to compute $\WVP(pq)$.

\begin{figure}[h]
  \centering
  \includegraphics[width=0.9\columnwidth]{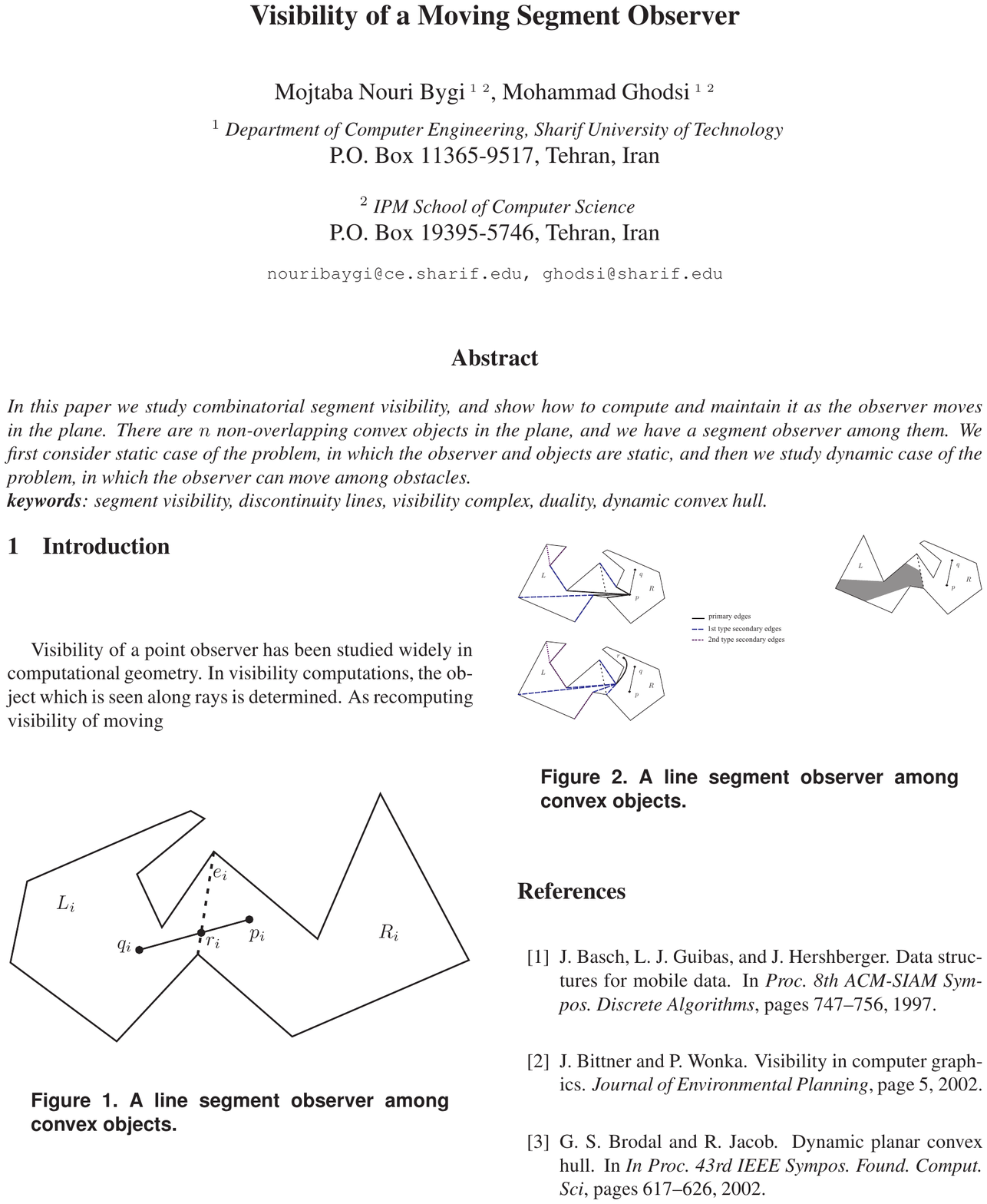}   
  \caption{The partial weak visibility polygon of the segment $pq$
  is defined as the part of the sub-polygon $L$ that is weakly visible
  from $pq$.}
  \label{fig:total}
\end{figure}

We will show how to use the algorithm of Guibas {\em et al.} \cite{guibas} 
to compute $\WVP_L(pq)$. To do so, we preprocess the polygon so that
we can answer the visibility query in an output sensitive way. 
The idea is to compute the visibility decomposition of the polygon and,
for each decomposition cell, compute the potential shortest path tree structures.
As the number of visibility regions is $O(n^3)$,  the preprocessing cost of 
our approach would be high.

To overcome, we only consider the critical constraint edges that cut $e$.
The number of such constraint edges is $O(n)$ and the complexity of the
decomposition is reduced to $O(n^2)$. This decomposition can be computed in $O(n^2)$
time. 
We call this decomposition the {\em partial visibility decomposition} of $\P$ with respect to $e$.
The remaining part of this section shows
how to modify the linear algorithm of Guibas {\em et al.} \cite{guibas} 
so that $\WVP_L(pq)$ can be computed in an output sensitive way. 
First, we show how to compute the
shortest path trees, and then present our algorithm for computing $\WVP_L(pq)$.

\subsection{Computing the partial $\SPT_L(p)$} \label{sec:pspt}
We define the {\em partial shortest path tree} $\SPT_L(p)$ to be the subset 
of $\SPT(p)$ that lead to a leaf node
in $L$. In other words, $\SPT_L(p)$ is the union of the shortest paths from $p$ 
to all the vertices of $L$.
In this section, we show how to preprocess the polygon
$P$, so that for any given point $p \in R$, any part of $\SPT_L(p)$
can be traversed in an output sensitive way. 
The shortest path tree $\SPT_L(p)$ is composed of two
kinds of edges: the {\em primary edges} that connect the root $p$ to its direct visible
vertices, and the {\em secondary edges} that connect two vertices of $\SPT_L(p)$
(see Figure \ref{fig:pspt}).
We also recognize two kinds of secondary edges: The 1st type of secondary edges 
(1st type for short) are those edges that are connected to a primary edge, 
and the 2nd type are the ones that connect other vertices of the polygon.
Notice that if a point $p$ crosses a critical constraint and that constraint
does not cut $e$, then the structure of $SPT_L(p)$ would not change.

\begin{figure}[h]
  \centering
  \includegraphics[width=1\columnwidth]{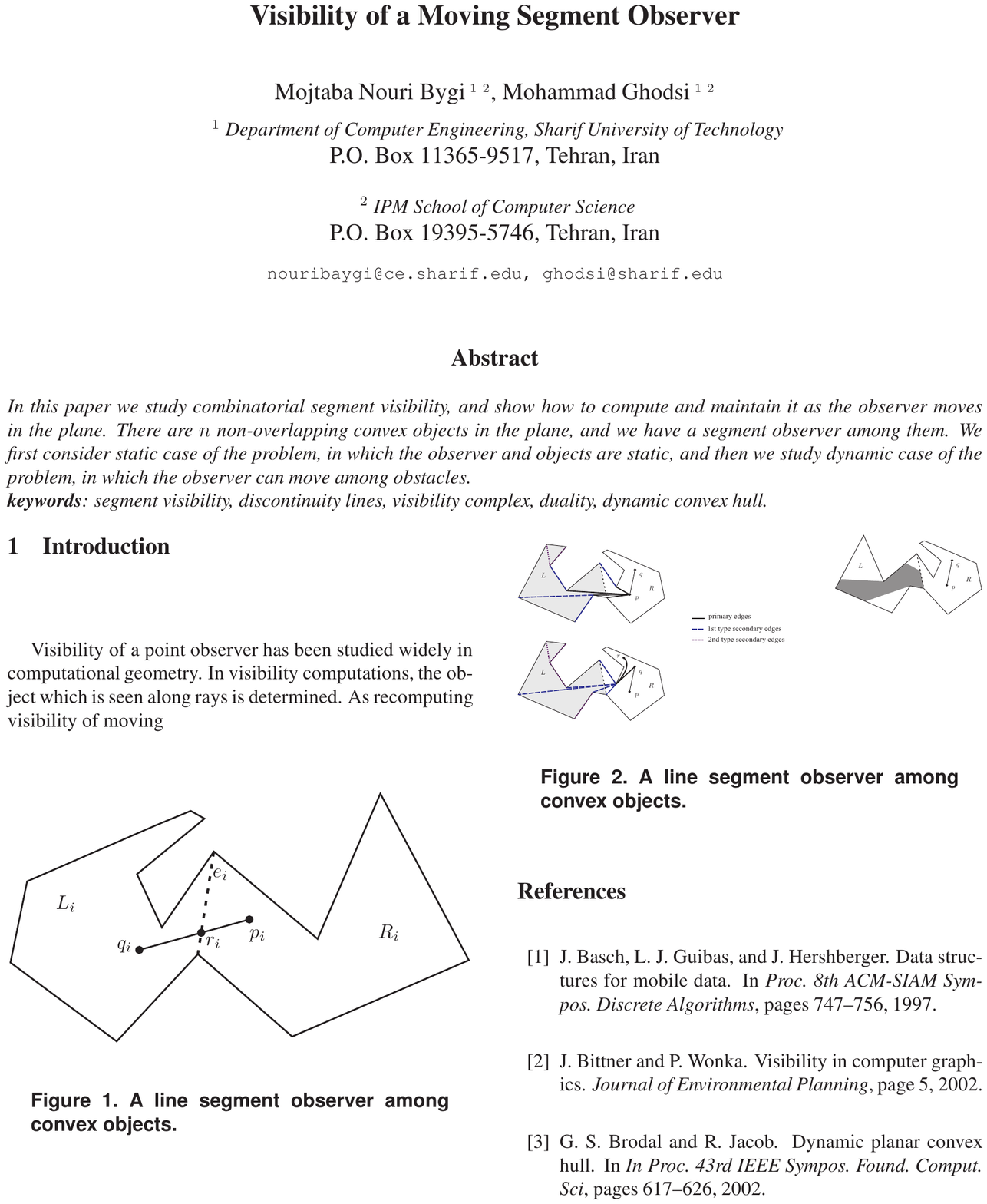}   % this is logo.pdf
  \caption{$\SPT_L$ for different points of $R$. 
	Notice that as $q$ and $r$ are on the same visibility region w.r.t.\ $L$,   
   $\SPT_L(q)$ and $\SPT_L(r)$ have the same structure.}
  \label{fig:pspt}
\end{figure}

%\begin{lemma}
%Consider a point $p$ in the polygon $P$ and assume that it is moving in the polygon. 
%If $p$ crosses a critical constraint edge and that constraint edge does not cut $e$, 
%then the structure of $SPT_L(p)$ does not change either.
%\end{lemma}
%
%
%
%\begin{lemma}
%Let $p$ and $q$ move slightly in the polygon. 
%If the combinatorial structures of both $\SPT_L(p)$ and $\SPT_L(q)$ do not change, 
%the combinatorial structure of $\WVP(pq)$ does not change either.
%\end{lemma}

We can compute the primary edges of $\SPT_L$ by using Aronov's output sensitive 
algorithm of computing the partial visibility polygon \cite{aronov}. 
More precisely, with a 
processing cost of $O(n^2 \log n)$ time and $O(n^2)$ space,
giving a point $p$ in query time, a pointer to the sorted list of the 
vertices that are visible to $p$ can be computed in $O(\log n)$ time.

It is also necessary to compute the list of the secondary
edges of every vertex of $SPT_L$.
Each vertex $r$ in $SPT_L$ have $O(n)$ possible 2nd type edges. Depending on
the parent of $r$, a sub-list of these edges would appear in $\SPT_L$. 
To store all the possible 2nd type edges of $r$, 
we compute and store this sub-list, or to be precise, the starting and ending edges
of the list, for all the possible parents of $r$.
As there are $O(n)$
possible parents for a vertex, 
these calculations can be performed for all the vertices of the polygon in total time of
$O(n^2 \log n)$ and the data can be stored in $O(n^2)$ space. 
Having these data, we can, in the query time, access the list
of the 2nd type edges of any vertex in constant time.

We build the same structure for the 1st type edges. The parent of a 1st type
edge is the root of the tree. As the root can be in any of the $O(n^2)$ different
visibility regions, computing and storing the starting and ending edges in the
list of 1st type edges of a vertex cost $O(n^3 \log n)$ time and $O(n^3)$ space.

We can reduce the time and space needed to compute and store these structures,
having this property that two adjacent regions have only $O(1)$ differences in their 1st
type edges.

\begin{lemma} \label{lemm:lemma2}
Consider a visibility region $V$ in the polygon and
suppose that the 1st type secondary edges are computed for a point $p$
in this region.
For a neighboring region that share a common edge with $V$, 
these edges can be updated in constant time.
\end{lemma}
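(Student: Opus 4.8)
The plan is to analyze how the shortest path tree structure changes as the root point $p$ moves from a visibility region $V$ into an adjacent region $V'$ sharing a common critical constraint edge $c$. First I would recall from the visibility decomposition machinery (Section~\ref{sec:pre:decompos}) that crossing a single critical constraint edge changes the visibility sequence of $p$ in exactly one vertex — say vertex $v$ either appears or disappears from the set of vertices directly visible to $p$. Consequently, the set of primary edges of $\SPT_L$ changes by exactly one edge $pv$. The 1st type secondary edges are precisely those secondary edges hanging off a primary edge, i.e.\ edges of the form $v w$ where $pv$ is primary; so the only 1st type edges that can be affected are those whose ``top'' endpoint is this single vertex $v$, or those that become/cease to be 1st type because their attachment vertex $v$ gains or loses its primary-edge status.

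The key steps, in order, would be: (1) Identify the unique vertex $v$ whose visibility status toggles across $c$; this is a property of the constraint edge $c$ itself and can be stored with $c$ during the $O(n^2)$-time construction of the partial visibility decomposition. (2) Argue that all primary edges other than $pv$ are unchanged, hence all 1st type secondary edges not attached to $v$ remain exactly as they were, with the same parent and the same start/end positions in their stored sub-lists. (3) For the vertex $v$: when moving into $V'$, if $v$ loses visibility, the (at most one) primary edge $pv$ is deleted, and the secondary edges previously of 1st type that were attached to $v$ must be re-classified — but by the 2nd type edge structure already precomputed for $v$ (indexed by $v$'s parent), their new description is obtained by a single lookup keyed on $v$'s new parent in $\SPT_L(p')$, which itself is determined in $O(1)$ from the shortest-path structure (the new parent of $v$ is the predecessor of $v$ along the shortest path, read off from the SPT). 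Symmetrically, if $v$ gains visibility, $pv$ becomes primary and a bounded number of edges at $v$ become 1st type, again described by an $O(1)$ lookup. (4) Conclude that only $O(1)$ entries of the 1st type edge table change, and each is recomputed in $O(1)$ time, establishing the claim.

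The main obstacle I anticipate is Step~(3): carefully verifying that the reclassification of edges at the single affected vertex $v$ is genuinely bounded in size and that the new start/end pointers into $v$'s secondary-edge list are recoverable in constant time. One has to be sure that toggling $v$'s visibility does not cascade — e.g.\ that it does not change the parent of some other vertex and thereby alter that vertex's 2nd type sub-list as well. I would handle this by appealing to the fact that in a simple polygon the shortest path tree changes only locally: the shortest paths to all vertices other than $v$ (and its subtree, whose internal structure is parent-invariant) route identically, because the only new or removed obstacle-interaction is at $v$. Thus the ``difference'' between $\SPT_L(p)$ and $\SPT_L(p')$ is confined to the edge $pv$ and the immediate re-parenting of $v$, which is exactly the $O(1)$ change asserted. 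Finally I would remark that this lemma is what will let us bring the $O(n^3\log n)/O(n^3)$ naive bound for storing 1st type edges down to $O(n^2\log n)/O(n^2)$, by storing the full 1st type list only at the sink regions and propagating the $O(1)$ differences along the dual graph, mirroring the sink-region technique of Bose {\em et al.}~\cite{bose}.
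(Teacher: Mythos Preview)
Your proposal is correct and takes essentially the same approach as the paper: a single vertex toggles visibility, and the resulting change to the $\SPT_L$ structure is local and constant-size. One small accounting slip worth fixing: the critical constraint is induced by \emph{two} reflex vertices $u$ and $v$, and the edge $uv$ --- a 1st type secondary edge hanging off $u$ --- is what turns into (or from) the primary edge $pv$; so $u$'s 1st type list also changes by one entry, contradicting your Step~(2) claim that only edges attached to $v$ are affected. The paper's proof explicitly lists the three $O(1)$ updates as: remove the secondary edge $uv$ at $u$, add the primary edge $pv$, and move the pointer for $v$'s children from the 2nd type list (under $uv$) to the 1st type list (under $pv$). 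Finally, your closing remark about propagating from sink regions along the dual graph is not the mechanism the paper uses downstream --- it instead forms a tour of all cells and builds a persistent red-black tree --- though that lies outside the lemma itself.
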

\begin{proof}
When a view point $p$ crosses the border of two neighboring regions, 
a vertex becomes visible or invisible to $p$ \cite{bose}.
In Figure \ref{fig:h4} for example, when $p$ crosses the
border specified by $u$ and $v$, a 1st type secondary edge of $u$
becomes a primary edge of $p$, and all the edges of $v$ become the 1st type secondary edges.
We can see that no other vertex is affected by this movement.
Processing these changes can be done in constant time, since it includes the following changes:
removing a secondary edge of $u$ ($uv$), adding a primary edge ($pv$), 
and moving an array pointer (edges of $v$) from the 2nd type edges of $uv$ to the 1st 
type edges of $pv$.
Note that we know the exact positions of these elements in their corresponding lists.
Finally, the only edge which involves in these changes
can be identified in the preprocessing time (the edge corresponding to the crossed critical constraint), 
so, the time we spent in the query time would be $O(1)$.
\end{proof}

\begin{figure}[h]
  \centering
  \includegraphics[width=.8\columnwidth]{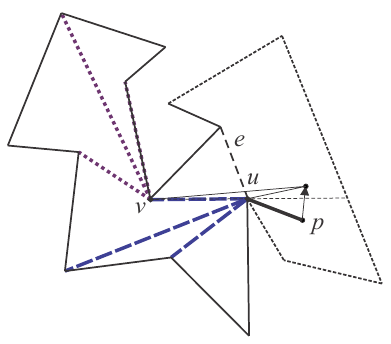}   % this is logo.pdf
  \caption{When $p$ enters a new visibility region, 
  the combinatorial structure of $\SPT_L(p)$ can be maintained in constant time.}
  \label{fig:h4}
\end{figure}

Having this fact and using a {\em persistent data structure}, e.g. {\em persistent red-black tree} \cite{sarnak}, 
we can reduce the cost of
storing the 1st type edges by a linear factor.
A persistent red-black tree is a red-black tree 
that can remember all its intermediate versions.
If a set of $n$ linearly ordered items are stored it the tree and
we perform $m$ update into it, any version $t$, for $1 \leq t \leq m$, can be 
retrieved in time $O(\log n)$. 
This structure can be constructed in $O((m + n) \log n)$ 
time by using $O(m + n)$ space.

\begin{theorem}\label{theo:theo-pspt}
A simple polygon $P$ can be processed into a data 
structure with $O(n^2)$ space and in $O(n^2 \log n)$ time so that for any query point $p$, the
shortest path tree from $p$ can be reported in $O(\log n + k)$, where $k$ is the size of
the tree that is to be reported.
\end{theorem}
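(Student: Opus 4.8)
The plan is to assemble the theorem from the three structural ingredients built in Section~\ref{sec:pspt}, together with a point-location layer on the partial visibility decomposition. Concretely, the data structure has four parts, each stayed within the claimed $O(n^2\log n)$ time and $O(n^2)$ space budget: (i) the partial visibility decomposition of $\P$ with respect to $e$, which has $O(n^2)$ cells and is built in $O(n^2)$ time, carrying a planar point-location structure with $O(\log n)$ query time; (ii) the partial-visibility-polygon structure of Aronov \emph{et al.}~\cite{aronov}, which for a query point $p$ returns in $O(\log n)$ time a pointer into the angularly sorted list of vertices visible from $p$ --- the primary edges of $\SPT_L(p)$; (iii) for every vertex $r$ and every one of its $O(n)$ possible parents, the endpoints of the sublist of $r$'s 2nd type children, precomputed by angular sorting in $O(n^2\log n)$ total time, stored in $O(n^2)$ space, and retrievable in $O(1)$ time once the parent is known; and (iv) a persistent red--black tree~\cite{sarnak} encoding the 1st type secondary edges.

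Part (iv) is where Lemma~\ref{lemm:lemma2} does the work. I would build it by a single traversal of the dual graph of the decomposition --- for instance an Euler tour of a spanning tree of the dual, which visits all $O(n^2)$ cells in $O(n^2)$ steps --- starting from the 1st type structure of one base cell and, at each crossing of a cell boundary, applying the constant-size update of Lemma~\ref{lemm:lemma2} (remove one secondary edge, insert one primary edge, relink one array pointer). This is a sequence of $m=O(n^2)$ updates on $O(n)$ linearly ordered items, so by the bounds recalled just above it is constructed in $O((m+n)\log n)=O(n^2\log n)$ time and $O(m+n)=O(n^2)$ space, and the version associated with any cell --- whose index we store in the cell --- is retrievable in $O(\log n)$ time. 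Because the tree represents a sequence, a single $O(\log n)$ search in the right version locates the relevant block and in-order stepping then enumerates, at $O(1)$ amortized cost each, the 1st type children of the directly visible vertices.

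To answer a query for $p\in R$, I would run the traversal underlying the algorithm of Guibas \emph{et al.} on $\SPT_L(p)$: point-locate $p$ to get its cell $C$, read from $C$ the pointer of part (ii) and the version index of part (iv), then DFS the tree --- the root's children are the primary edges from (ii); the children of a directly visible vertex are its 1st type edges, obtained from the stored version in (iv); and the children of a deeper vertex $r$ with parent $w$ are its 2nd type edges, obtained in $O(1)$ time from $w$ via (iii). Every reported vertex or edge is then produced in $O(1)$ amortized time, while the only super-constant work --- the point location and the initial version retrieval --- is $O(\log n)$ and paid once, for a total of $O(\log n + k)$, where $k=|\SPT_L(p)|$ is the size of the reported tree.

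The step I expect to be the main obstacle is making part (iv) cheap at both ends simultaneously. On the construction side, the $O(n^2)$-update bound relies on Lemma~\ref{lemm:lemma2} being valid across \emph{every} decomposition edge and on the dual graph being connected, so that one global update sequence whose length is proportional to the number of cells suffices; one must also check that ``relink an array pointer'' is a constant number of red--black-tree operations rather than a bulk sublist move. On the query side, probing the persistent version once per directly visible vertex would cost $\Theta(n\log n)$ in the worst case and break the bound, so the 1st type edges of all visible vertices must lie contiguously in one persistent sequence, enumerable by a single search followed by an in-order walk; the 2nd type edges further down the tree are exactly what the $O(1)$-access table of part (iii) is there to handle.
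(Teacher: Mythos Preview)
Your proposal is correct and follows essentially the same approach as the paper: Aronov's structure for the primary edges, the $O(n^2)$ parent-indexed table for the 2nd type edges, and a persistent red--black tree built along a tour of the $O(n^2)$ cells (using Lemma~\ref{lemm:lemma2} for the $O(1)$ per-step update) for the 1st type edges, all sitting on a point-location layer over the partial visibility decomposition. Your discussion of the potential obstacles---connectedness of the dual, that the ``pointer relink'' is a single tree operation, and that the 1st type edges must be enumerable contiguously rather than via a separate $O(\log n)$ probe per visible vertex---is more explicit than the paper's own proof, which glosses over these points.
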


\begin{proof}
First, we use Aronov's algorithm for computing the partial visibility polygon of $p$.
For this, $O(n^2)$ space and $O(n^2 \log n)$ time is needed in the preprocessing phase.
For the secondary edges,  $O(n^2 \log n)$ time and $O(n^2)$
space is needed to compute and store these edges.
Also, a point location structure is built on top of the arrangement.

In the query time, the partial visibility region of $p$ can be located in $O(\log n)$ to
have the sorted list of the visible vertices from $p$.
As the visible vertices from $p$ correspond to the primary edges of $\SPT_L$,
we also have the primary edges of $\SPT_L(p)$.

For the 1st type edges, a tour is formed to visit all 
the cells of the partial visibility decomposition.
From Lemma \ref{lemm:lemma2}, we can start from an arbitrary cell, walk along the tour, 
and construct a persistent red-black tree on the 1st type edges 
of $\SPT_L$ of a point in each cell.
As there are $O(n^2)$ cells and, each cell has $O(n)$ 1st type edges, 
the structure takes $O(n^2)$ storage and can be built in $O(n^2 \log n)$ preprocessing time.
Having this structure, the 1st type edges of the cell containing
$p$ can be retrieved from the persistent data structure in $O(\log n)$ time.

Finally, at each node of the tree, we have the list of 2nd type edges from that node.
Therefore, the cost
of traversing $\SPT_L$ is the number of visited nodes of the tree, plus the
initial $O(\log n)$ time. In other words, the query time is $O(\log n + k)$, where 
$k$ is the number of the traversed edges of the $\SPT_L$.
\end{proof}

\subsection{Computing $\WVP_L(pq)$} \label{sec:partial:wvp}

Now that that we show how to compute $SPT_L(p)$ for any point $p \in R$ in the query time,
we can use the linear algorithm presented in Section \ref{sec:guibas} for computing $\WVP$
of a simple polygon and modify it to compute $\WVP_L(pq)$ 
in an output sensitive way. As we can see in Figure \ref{fig:polygonal-edge},
the algorithm can be extended to the cases that $pq$ is not a polygonal edge.
\begin{figure}[h]
	\centering
	\includegraphics[scale=1.6]{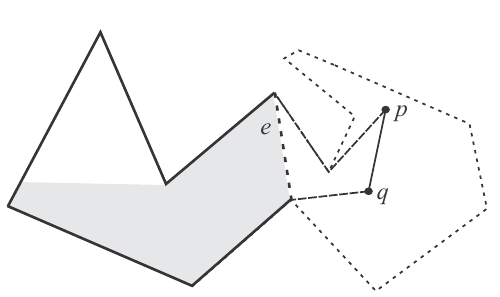}
    \caption{In computing $\WVP_L(pq)$ we can assume $pq$ to be a polygonal edge.}
	\label{fig:polygonal-edge}
\end{figure}

To achieve an output sensitive algorithm, we store some additional information 
about the vertices of the polygon in the preprocessing phase. 
We say that a vertex $v \in L$ is {\em left critical}
(LC for short) with respect to a point $q \in R$, if $\SP(q, v)$ makes its first left
turn at $v$ or one of its ancestors. In other words, each shortest path from $p$ to a
non-LC vertex is a convex chain that makes only clockwise turns at each node.
The {\em critical state} of a vertex is whether it is LC or not.
If we have the critical state of all the vertices of $L$ with respect to a point $q$, we say that we
have the {\em critical information} of $q$.

The idea is to change the algorithm of Section \ref{sec:guibas} and make it
output sensitive.
The outline of the algorithm is as follows:
In the first round, we traverse $SPT_L(p)$ using
DFS. At each vertex, we check whether this vertex is left critical with respect
to $q$ or not. If so, we are sure that the descendants of this vertex are not visible from
$pq$, so, we postpone its processing to the time it is reached from $q$, and check
other branches of $SPT_L(p)$. Otherwise, proceed with the algorithm and
check whether $SPT_L(p)$ makes a right turn at this vertex. In the second round,
we traverse $SPT_L(q)$ and perform the normal procedure of the algorithm. 

\begin{remark} \label{lemm:lemma3}
All the traversed vertices in $SPT_L(p)$ and $SPT_L(q)$ are 
vertices of $\WVP_L(pq)$.
\end{remark}

In the preprocessing phase, we compute the critical
information of a point inside each region, and assign this information to that region. In
the query time and upon receiving a line segment $pq$, we find the regions of $p$ and $q$. Using the
critical information of these two regions, the above algorithm can be applied to compute
$\WVP_L(pq)$.

As there are
$O(n^2)$ visibility regions in the partial visibility decomposition, $O(n^3)$ space 
is needed to store the critical information
of all the vertices. For each region, we compute $\SPT_L$ of a point, and by traversing
the tree, we update the critical information of each vertex with respect to this
region. An array of size $O(n)$ is assigned to each region to store these information.
We also build the structure described in Section \ref{sec:pspt} to compute $\SPT$ in
$O(n^3 \log n)$ time and $O(n^3)$ space. In the query time, we locate the visibility regions of
$p$ and $q$ in $O(\log n)$ time. By Remark \ref{lemm:lemma3}, when we proceed the algorithm 
in $\SPT_L$s of $p$ and $q$, 
we only traverse the vertices of $\WVP_L(pq)$.
Finally, as the processing time spent in each vertex
is $O(1)$, the total query time is $O(\log n + |\WVP_L(pq)|)$.

To improve this result, we use the fact that any two adjacent regions have
$O(1)$ differences in their critical information.

\begin{lemma} \label{lemm:lemma5}
In the path between neighboring visibility regions,
the changes of the critical information can be handled in constant time.
\end{lemma}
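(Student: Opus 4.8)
The plan is to mirror the proof of Lemma \ref{lemm:lemma2}. First I would recall, from \cite{bose} and from the analysis already used in Lemma \ref{lemm:lemma2}, that when $q$ crosses the common edge of two neighboring regions of the partial visibility decomposition, exactly one pivot vertex $w$ changes its parent in $\SPT_L(q)$: the subtree rooted at $w$ is detached from some vertex $v$ and re-attached nearer the root $q$ (or the reverse), while every other adjacency of the tree is untouched. The identities of $w$ and $v$ are fixed by the crossed critical constraint, hence available from the preprocessing phase.

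Second, I would use that whether a vertex $x\in L$ is left critical with respect to $q$ depends only on the turns along $\SP(q,x)$, and that this property is monotone down $\SPT_L(q)$ — once the path has turned left it stays left critical. Hence every $x$ outside the subtree rooted at $w$ keeps its critical state, and inside that subtree the only affected turns are the single turn at $w$ (whose incoming edge switches between $vw$ and $qw$) and the block of turns along the old sub-path from $q$ to $v$ that disappears or appears. I would then argue that the vertices whose critical state actually flips form one connected group hanging from $w$ — namely the portion of the subtree reachable from $w$ through right turns only — and that on this group the new value is uniform, determined by the single new turn at $w$ together with whether $\SP(q,v)$ was convex beforehand.

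Given this, I would store a region's critical information with one level of indirection rather than a raw bit per vertex: each vertex of $L$ points to the head of its ``convex group'', and the group carries a single shared bit, so that a bulk flip becomes a single write — the analogue of ``moving an array pointer'' in Lemma \ref{lemm:lemma2}. The update at a region boundary then only rewires the pointer of $w$ and a constant number of neighbours and flips one shared bit; since all participants are identified in preprocessing this is $O(1)$ work, and combined with the persistent structure of Theorem \ref{theo:theo-pspt} it keeps the whole family of critical-information arrays within $O(n^2)$ space and $O(n^2\log n)$ construction time.

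The hard part will be the second step: pinning down exactly which vertices of the subtree rooted at $w$ change state and verifying that they form a single, uniformly-labelled group. The subtlety is that the turn at a vertex depends on which outgoing edge is taken, so a vertex with several children may be convex toward some and not others; one must exploit that $v$ and the rest of $\SPT_L(q)$ are untouched, and that only one new turn is introduced, to conclude that the update stays $O(1)$-describable. Handling the two directions of the crossing symmetrically, and the degenerate case of an empty group at $w$, would complete the argument.
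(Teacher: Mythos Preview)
Your plan correctly isolates the set of vertices whose left-critical state can flip at a boundary crossing: it is indeed the portion of the subtree of the re-attached vertex $w$ that is reachable from $w$ through right turns only, and the flip is uniform on that set. The gap is in the data structure you propose to exploit this. You want each vertex to carry a pointer to the head of its ``convex group'' and the head to carry a single shared bit, so that a flip is one write. But this partition into convex groups is itself a function of the current $\SPT_L(q)$, and it changes at the crossing. Concretely, if the turn at the old parent $u$ toward $w$ was a \emph{right} turn, then before the crossing $w$ was not a group head at all: it sat inside a larger group containing $u$ (and possibly ancestors and siblings of $u$). After re-attachment $w$ must head its own group, and every vertex of that new group still points to the old head; redirecting those pointers is $\Theta(|\text{group}|)$ work, not $O(1)$. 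The reverse crossing produces the symmetric merge. One level of indirection cannot absorb a split or a merge; with multiple levels you lose the $O(1)$ per-vertex query; and a union--find style structure does not sit well under the persistence you invoke at the end.

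The paper sidesteps this entirely by storing, instead of a boolean, an integer \emph{critical number} at each vertex (the number of left turns along $\SP(q,v)$) together with a \emph{debit number} recording a pending additive correction for the whole subtree rooted there. At a boundary crossing only the single vertex $v$ is touched: its critical number is reset and its debit is set to $\pm 1$. Propagating the debit down the subtree is deferred to query time, where it adds $O(1)$ work per traversed node and composes additively across any number of crossings without ever needing to know where the ``groups'' begin or end. That additive, lazily-propagated counter is the missing ingredient; your shared-bit scheme is essentially trying to maintain the predicate ``critical number $>0$'' directly, which is exactly what forces you to track the dynamic group boundaries.
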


\begin{proof}
\begin{figure}	
	\centering
	\begin{subfigure}[b]{0.4\textwidth}
		\centering
		\includegraphics[width=.8\columnwidth]{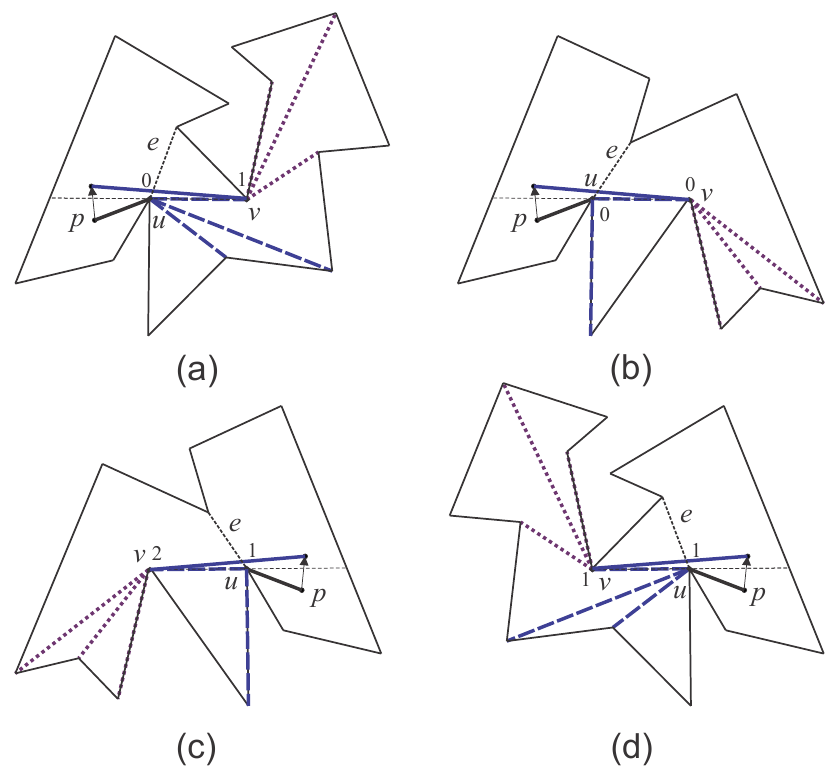}
		\caption{$v$ is LC but $u$ is not.}\label{fig:e2-1a}		
	\end{subfigure}
	\begin{subfigure}[b]{0.4\textwidth}
		\centering
		\includegraphics[width=.8\columnwidth]{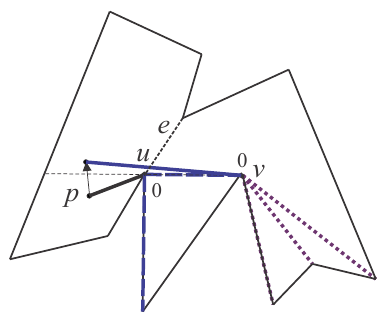}
		\caption{$u$ and $v$ are not LC.}\label{fig:e2-1b}		
	\end{subfigure}
	\\
	\begin{subfigure}[b]{0.4\textwidth}
		\centering
		\includegraphics[width=.8\columnwidth]{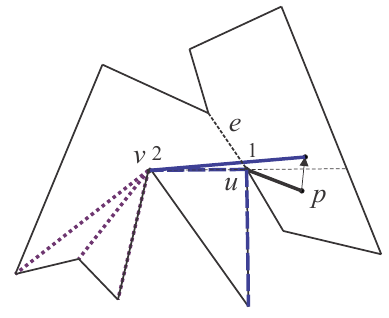}
		\caption{both $u$ and $v$ are LC.}\label{fig:e2-1c}		
	\end{subfigure}
	\begin{subfigure}[b]{0.4\textwidth}
		\centering
		\includegraphics[width=.8\columnwidth]{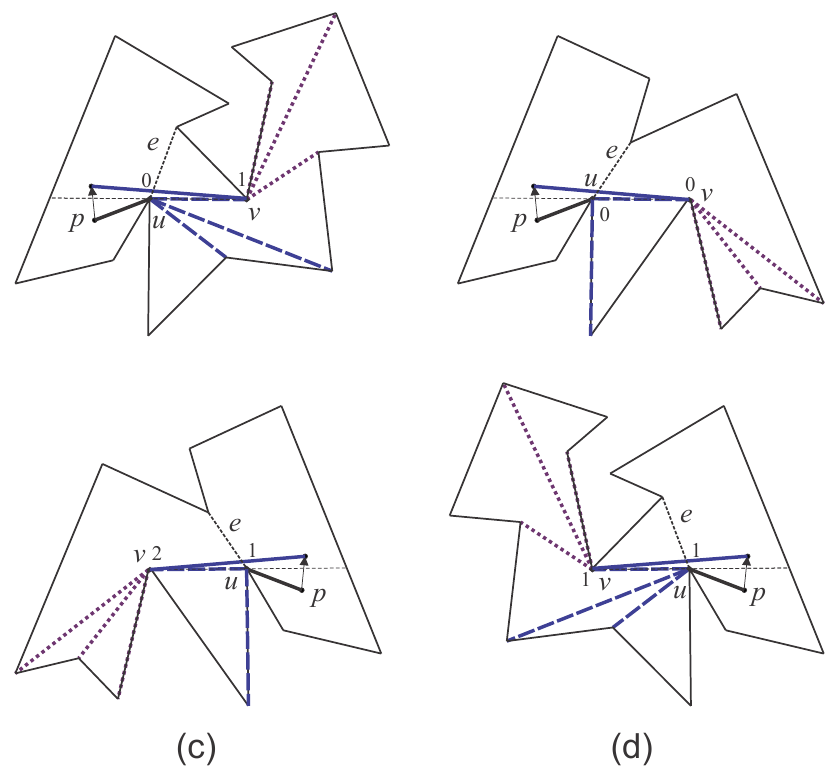}
		\caption{$u$ is LC but $v$ is not. }\label{fig:e2-1d}		
	\end{subfigure}
  \caption{Changes in the critical state of $v$ w.r.t.\ $p$, as $p$ moves between the two regions.}
  \label{fig:e2-1}
\end{figure}
Suppose that we want to maintain the critical information of $p$ and 
$p$ is crossing the critical constraint defined by $uv$, where $u$ and
$v$ are two reflex vertices of $P$. 
The only vertices that affect directly by this change are $u$ and $v$.
Depending on the 
critical states of $u$ and $v$ w.r.t.\ $p$, four situations may 
occur (see Figure \ref{fig:e2-1}). In the first three cases, the critical state of 
$v$ will not change. In the forth case, however,
the critical state of $v$ will change. Before the cross, the shortest path
makes a left turn at $u$, therefore, both $u$ and $v$ are LC w.r.t.\ $p$. However, after the cross, $v$
is no longer LC. This means that the critical state of all the children of $v$ in the $\SPT_L(p)$
may change as well.

To handle these cases, we modify the way the critical information of each vertex w.r.t.\ $p$
are stored. 
At each vertex $v$, we store two additional values: the number of LC vertices we met
in the path $SP(p,v)$ from $p$, or its {\em critical number}, and a {\em debit number}, 
which is the critical number that is to be propagated in the vertex subtree. 
It is clear that if a vertex is LC, it means that its critical number is greater 
than zero (see Figure \ref{fig:crtitical-numbers}). Also, if a vertex has 
a debit number, the critical numbers of all its children must be added by this
debit number. Notice that computing and storing 
these numbers along the critical information will not change the
time and space requirements.

\begin{figure}[h]
  \centering
  \includegraphics[width=.9\columnwidth]{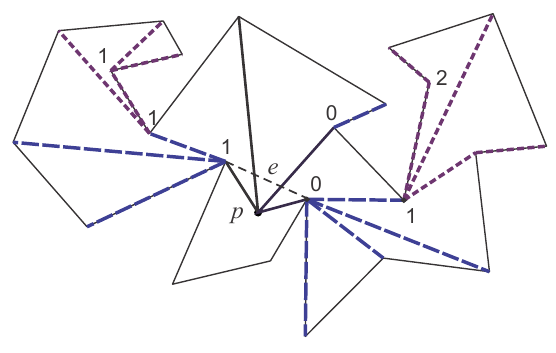}  
  \caption{
The critical number represents the number of the left critical vertices met from $p$ in $\SPT_e(p)$.}
  \label{fig:crtitical-numbers}
\end{figure}

Now let us consider the forth case in Figure \ref{fig:e2-1}. When $v$ becomes visible to $p$,
it is no longer LC w.r.t.\ $p$. Therefore, the critical number of $v$ is changed to $0$.
However, instead of changing the critical numbers of all the children of $v$, 
we set the debit number of $v$ to -1, indicating that the critical numbers of all the vertices of its subtree 
must be subtracted by 1. The actual propagation of this subtraction will happen 
at query time when $\SPT_L(p)$ is traversed. 
If $p$ moves in the reverse path, i.e., when $v$ becomes invisible to $p$, 
we handle the tree in the same way by storing $1$ in the debit numbers, and
propagating this addition in the query time.
\end{proof}

A persistent data structure can be used to reduce the costs to $O(n^2 \log n)$
preprocessing time and $O(n^2)$ storage. 
We form a tour visiting all the cells 
and construct a persistent red-black tree on the critical information and the 2nd
type edges of all the nodes. The structure takes $O(n^2)$ storage and can be built
in $O(n^2 \log n)$ preprocessing time. In addition, we build a point location
structure on top of the arrangement which can be done in $O(n^2)$ time and $O(n^2)$
space \cite{kirkpatrick}.

\begin{theorem}
\label{theorem:partial-visibility}
Given a polygon $P$ and a diagonal $e$ which cuts $P$ into two parts,
$L$ and $R$, and using $O(n^2 \log n)$ time, we can construct a data structure of size
$O(n^2)$ so that, for any query line segment $pq \in R$, the partial weak visibility
polygon $\WVP_L(pq)$ can be reported in $O(\log n + |\WVP_L(pq)|)$ time.
\end{theorem}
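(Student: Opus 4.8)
The plan is to assemble the three ingredients already established — Aronov-style partial visibility decomposition, the output-sensitive partial shortest path tree structure of Theorem \ref{theo:theo-pspt}, and the critical-information maintenance of Lemma \ref{lemm:lemma5} — into a single preprocessing step and then run the modified Guibas--Hershberger--Leven--et al.\ algorithm of Section \ref{sec:guibas} at query time. First I would describe the preprocessing: compute the partial visibility decomposition of $P$ with respect to $e$, which has $O(n^2)$ cells and is built in $O(n^2)$ time; overlay on it the structure of Theorem \ref{theo:theo-pspt} so that $\SPT_L(p)$ is reportable in $O(\log n + k)$ time for any $p\in R$, costing $O(n^2\log n)$ time and $O(n^2)$ space; then form a traversal tour of the cells and, using Lemma \ref{lemm:lemma5} so that each step of the tour changes only $O(1)$ pieces of critical information (with the critical-number/debit-number encoding handling the one case where a whole subtree's state flips), build a persistent red-black tree recording the critical information and the 2nd-type secondary edges of the nodes along the tour. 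This persistent structure is built in $O(n^2\log n)$ time and $O(n^2)$ space, and we add a point-location structure on the arrangement in $O(n^2)$ time and space \cite{kirkpatrick}. Summing, preprocessing is $O(n^2\log n)$ time and $O(n^2)$ space, as claimed.

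Next I would describe the query. Given $pq\in R$, locate the cells of the partial visibility decomposition containing $p$ and $q$ in $O(\log n)$ time each, and from the persistent tree recover in $O(\log n)$ time the critical information of $q$'s cell (the critical state of every vertex of $L$ with respect to $q$, read off via the critical numbers after accounting for debit numbers), together with a handle to the primary and secondary edges of $\SPT_L(p)$. Then run the two-round modified algorithm of Section \ref{sec:partial:wvp}: in the first round DFS-traverse $\SPT_L(p)$, but at any vertex that is LC with respect to $q$ prune the subtree (its descendants are not visible from $pq$ and will be handled from $q$), and at any right turn perform the $O(1)$ cut exactly as in Guibas {\em et al.}; in the second round traverse $\SPT_L(q)$ and perform the symmetric left-turn cuts. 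By Remark \ref{lemm:lemma3}, every vertex touched in either round is a vertex of $\WVP_L(pq)$, and the debit-number propagation along each traversed path costs $O(1)$ per node, so the whole query takes $O(\log n)$ for the two point locations and persistent-tree accesses, plus $O(1)$ per reported vertex, i.e.\ $O(\log n + |\WVP_L(pq)|)$.

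Two points need care. First, correctness of the pruning: I must argue that if $\SP(q,v)$ first turns left at $v$ or an ancestor, then no point of $pq$ sees any descendant of $v$ in $\SPT_L(p)$ — this is where the LC notion does its work, and it is essentially the observation underlying the original linear-time algorithm, transplanted to the two-source setting through $e$. Second, consistency of the combinatorial data across the cell boundary crossings: the persistent tree must simultaneously carry the 2nd-type secondary-edge pointers and the critical numbers, and I need that each tour step updates only $O(1)$ of these — the secondary-edge part is covered by the analysis in Section \ref{sec:pspt} and the critical-information part by Lemma \ref{lemm:lemma5}, so the two persistent updates can be bundled. I expect the main obstacle to be precisely this bookkeeping: verifying that the handoff between the $p$-round and the $q$-round processes each pruned subtree exactly once and that the critical numbers retrieved for $q$'s cell are exactly the ones the algorithm needs while the $\SPT_L(p)$ traversal is still driven by $p$'s cell. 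The geometric cuts themselves are the unchanged $O(1)$ operations of \cite{guibas}, so no new computational-geometry difficulty arises there; the work is in showing the data structures deliver the right state in the right order within the stated bounds.
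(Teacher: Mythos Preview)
Your proposal is correct and mirrors the paper's own argument essentially step for step: the partial visibility decomposition, the $\SPT_L$ structure of Theorem~\ref{theo:theo-pspt}, the critical-information tour with persistence via Lemma~\ref{lemm:lemma5}, Kirkpatrick point location, and the two-round LC-pruned traversal justified by Remark~\ref{lemm:lemma3}. The concerns you flag about pruning correctness and cross-cell bookkeeping are not elaborated further in the paper either; it simply asserts the construction and bounds exactly as you have.
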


\section{Computing $\WVP$ by balanced triangulation}\label{sec:balanced}
There is always a diagonal 
$e$ of a simple polygon that cuts
$\P$ into two pieces, each having at most $2n/3$ vertices \cite{chazelle82}.
We can recursively subdivide and build a balanced binary tree
where the leaves are triangles and each interior node $i$ corresponds
to a subpolygon $P_i$ and a diagonal $e_i$. Each diagonal $e_i$ divides $P_i$
into two subpolygons, $L_i$ and $R_i$, which respectively correspond to the left and right subtrees of
$i$ (see Figure \ref{fig:triangulation}). 
We build the data structures described in Section \ref{sec:partial} for $L_i$ and $R_i$
with respect to $e_i$.

\begin{figure}[h]
	\centering
	\includegraphics[scale=1.6]{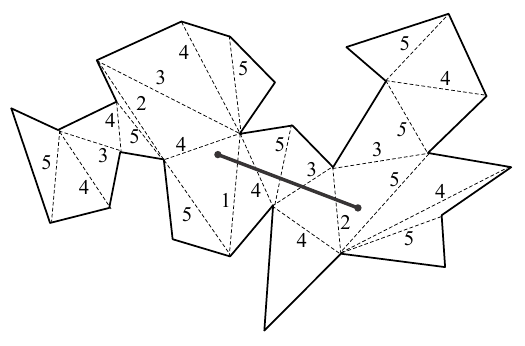}
	\caption{A balanced binary triangulation of the polygon is built so that 
	the the weak visibility polygon can be computed recursively.}
	\label{fig:triangulation}
\end{figure}

\begin{figure}[h]
	\centering
\begin{tikzpicture}
[
cross/.style ={draw=red, thick, circle,fill=red!20,minimum height=1em},
leaf/.style ={draw=blue, thick, circle,fill=blue!20,minimum height=1em},
normal/.style ={draw=yellow!4, thick, circle,fill=yellow!10,minimum height=1em},
level distance=8mm,
every node/.style={circle,inner sep=1pt},
level 1/.style={sibling distance=32mm},
level 2/.style={sibling distance=16mm},
level 3/.style={sibling distance=8mm},
level 4/.style={sibling distance=4mm},
level 5/.style={sibling distance=2mm}
]
\node [cross] {1}
	child {node [cross] {2}
		child {node {3}
			child {node {4}
				child {node {5}}
				child {node {5}}
			}
			child {node {4}
				child {node {5}}
				child {node {5}}
			}
		}
		child {node [cross] {3}
			child {node {4}
				child {node {5}}
				child {node {5}}
			}
			child {node [cross] {4}
				child {node {5}}
				child {node [leaf] {5}}
			}
		}
	}
	child {node [cross] {2}
		child {node [cross] {3}
			child {node [cross]{4}
				child {node [leaf] {5}}
				child {node [leaf] {5}}
			}
			child {node [leaf]{4}
				child {node {5}}
				child {node {5}}
			}
		}
		child {node [cross] {3}
			child {node [cross]{4}
				child {node [leaf]{5}}
				child {node {5}}
			}
			child {node {4}
				child {node {5}}
				child {node {5}}
			}
		}
	};
\end{tikzpicture}
\caption{The specified nodes correspond to the computed partial $\WVP$s 
in Figure \ref{fig:triangulation}}
\label{fig:tree}
\end{figure}
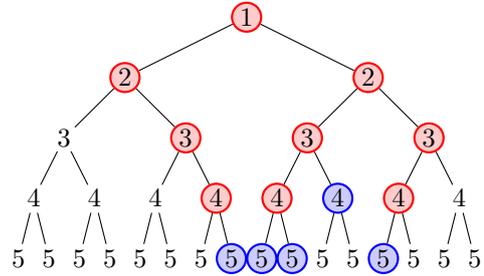

To compute $\WVP(pq)$, $p$ and $q$ will be located among the leaf triangles. In the simplest
case, both $p$ and $q$ belong to the same triangle. First we explain this situation.
We construct $\PWVP_i(pq)$ for
each $i$ from the leaf to the root. 
Here, $\PWVP_i(pq)$ is the partial weak visibility polygon of
$pq$ in $P_i$ with respect to $e_i$.
For the leaf node, it is the corresponding triangle, and
for other nodes, it can be computed inductively. In this case, the inductive step
is similar to that of \cite{aronov}. In each step, the merging of the computed polygons
can be done in $O(\log n)$.

The time and space needed for building an exterior visibility decomposition of a 
simple polygon with $m$ vertices are $O(m^2)$ and $O(m^2 \log m)$, respectively.
Thus, the space and time of the above inductive procedure 
can be expressed as the following equations:
%\begin{equation}
\begin{align*}
S(n) &= \max_{n/3 \leq m \leq 2n/3} (S(m) + S(n-m)) + \Theta(n^2),\\
T(n) &= \max_{n/3 \leq m \leq 2n/3} (S(m) + S(n-m)) + \Theta(n^2 \log n)
\end{align*}
%\end{equation}
Therefore, $S(n) = \Theta(n^2)$, and $T(n) = \Theta(n^2 \log n)$.
With the same analysis as in \cite{aronov}, we can calculate the query time.
Two point locations can be done in $O(\log n)$ time. As the triangulation
is balanced, we path from the root to any node has $O(\log n)$ length.
As we showed in Theorem \ref{theorem:partial-visibility}, the time needed
to query $\PWVP_i(pq)$ at step $i$ is $O(\log n + |PWVP_i(pq)|)$.
Also, the merging at each step can be done in $O(\log n)$ time.
Therefore, the total query time is $O(\log n + \sum_i(\log n + |PWVP_i(pq|))$,
or $O(\log^2 n + |WVP(pq)|)$.

The tricky part is when $p$ and $q$ are on different triangles. Assume that at
step $i$, the query line segment is $p_iq_i$ and it is in sub-polygon $P_i$. 
The sub-polygon $P_i$ is divided by
diagonal $e_i$ to two sub-polygons $L_i$ and $R_i$. If $p_iq_i$ does not intersect $e_i$, 
without loss of generality, assume that $p_iq_i$ is located in $R_i$ (see Figure \ref{fig:inductive-step-1}).
We do the normal procedure of the algorithm and compute 
$\WVP_{L_i}(p_iq_i)$.
We continue to recursively compute weak visibility polygon on $R_i$.
In this case, the time needed by this step can be expressed as
$T(n_i, p_iq_i) = T(n_i/2, p_iq_i)+ O(\log n_i) + |\PWVP_{L_i}(p_iq_i)|$.

On the other hand, if $p_iq_i$ and $e_i$ intersect at point $r_i$,
without loss of generality, assume that $p_i$ is in $R_i$ and $q_i$ is in $L_i$
(see Figure \ref{fig:inductive-step-2}).
We can express $\WVP(p_iq_i)$ as the union of four weak visibility polygons:
\begin{enumerate}[i]
\item the partial weak visibility polygon of $p_ir_i$ on $L_i$, %($\PWVP_{L_i}(p_ir_i)$), 
\item the weak visibility polygon of $p_ir_i$ in $R_i$, %($\WVP_{R_i}(p_ir_i)$),
\item the partial weak visibility polygon of $r_iq_i$ on $R_i$, %($\PWVP_{R_i}(r_iq_i)$), and
\item the weak visibility polygon of $r_iq_i$ in $L_i$. %($\WVP_{L_i}(r_iq_i)$). 
\end{enumerate}
In other words,
we must compute two partial weak visibility polygons, and two weak visibility sub-problems. 
Having these four visibility polygons, the union of them can be merged 
in time $O(4 |\WVP_{P_i}(p_iq_i)|)$.
According to the Theorem \ref{theorem:partial-visibility},
the query time spent at step $i$ can be expressed as: 
$T(n_i,p_iq_i) = O(\log n_i) + T(n_i/2,p_ir_i)+ T(n_i/2,r_iq_i)+ |\PWVP_{L_i}(p_ir_i)| + |\PWVP_{R_i}(r_iq_i)|$.

\begin{figure}	
	\centering
	\begin{subfigure}[b]{0.48\textwidth}
		\centering
		\includegraphics[scale=1.7]{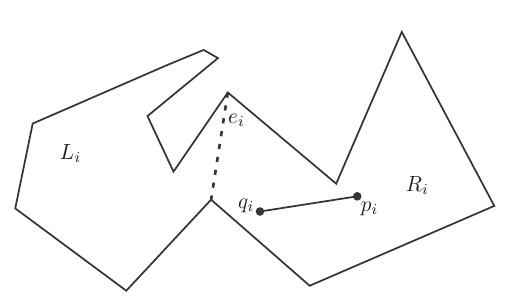}
		\caption{This case can be phrased as
		$\WVP(p_iq_i) = \PWVP_{L_i}(p_ir_i) + \WVP_{R_i}(p_ir_i)$}\label{fig:inductive-step-1}		
	\end{subfigure}
	\begin{subfigure}[b]{0.48\textwidth}
		\centering
		\includegraphics[scale=1]{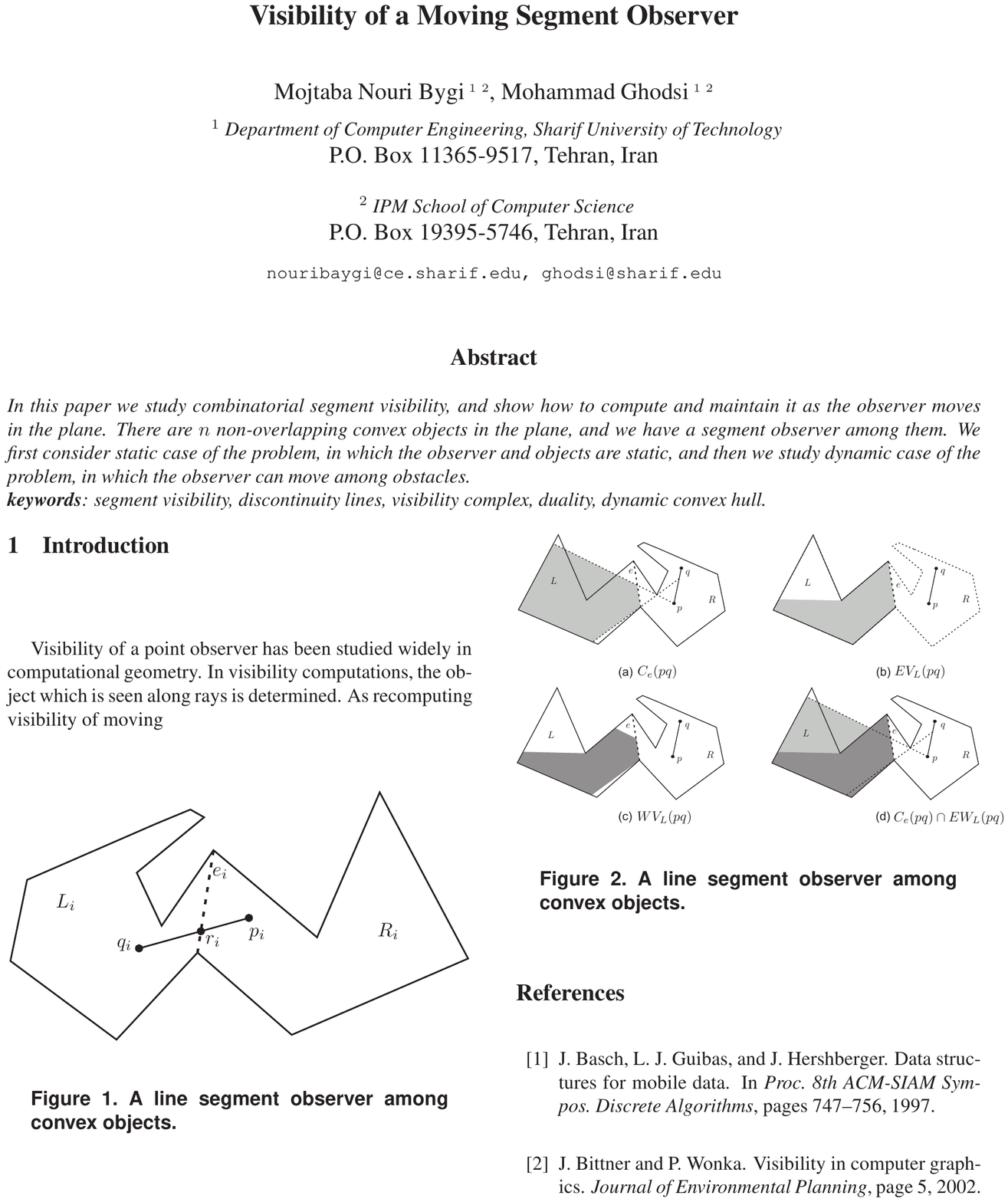}
 	\caption{This can be phrased as $\WVP(p_iq_i) = \PWVP_{L_i}(p_ir_i) 
 	+ \WVP_{R_i}(p_ir_i) + \PWVP_{R_i}(r_iq_i) + \WVP_{L_i}(r_iq_i)$.}\label{fig:inductive-step-2}		
	\end{subfigure}
\caption{The induction step can be categorized as one of these situations.}
  \label{fig:inductive-step}
\end{figure}

The preprocessing costs of the algorithm is the same as before. For the query
time, by induction, we can prove that $T(n,pq) = O(\log^2 n + |WVP(pq)|)$.
Assume that this property holds for every simple polygon with less than $n$ vertices and
every line segment $pq$ in it. For a simple polygon $P$ with $n$ vertices,
depending on whether $pq$ cuts the diagonal $e$, 
we have one of these equation:
\begin{align*}
  T(n,pq) &= T(\frac{n}{2}, pq) + O(\log n) + |\PWVP_e(pq)| \\
  		  \text{or,} \\
  T(n,pq) &= T(\frac{n}{2},pr) + T(\frac{n}{2},rq)+ O(\log n) \\
		   &   + |\PWVP_{e}(pr)| + |\PWVP_{e}(rq)|
\end{align*}
Here, $\PWVP_e(pq)$ is the partial weak visibility polygon of $pq$ w.r.t.\ the cut $e$.
In the first case, we have
\begin{align*}
  T(n,pq) &= O(\log^2 \frac{n}{2} + \log n + |\WVP_{\frac{n}{2}}(pq)| ) \\
	&= O(\log^2 n + |\WVP(pq)|)
\end{align*}
In the second case, we have 
\begin{align*}
  T(n,pq) &= O(\log n + \log^2 \frac{n}{2} + |\WVP_{\frac{n}{2}}(pr)| \\ 
  		 &+ |\WVP_{\frac{n}{2}}(rq)| +  |\PWVP_{e}(pr)| + |\PWVP_{e}(rq)|) \\
  		 &=  O(\log^2 n + |\WVP(pq)|)
\end{align*}
In these equations, $\WVP_{\frac{n}{2}}(pq)$ is the weak visibility polygon of 
$pq$ in a polygon of size $\frac{n}{2}$.

As for the base case, we showed that if $pq$ is located in single triangle, 
the time for computing $\WVP(pq)$ would be $O(\log^2 n + |\WVP(pq)|)$.
In summary, we have the following theorem:

\begin{theorem} \label{theoremxx}
A simple polygon $\P$ of size $n$ can be processed in $O(n^2 \log n)$ time into a
data structure of size $O(n^2)$ so that, for any query line segment $pq$, $\WVP(pq)$
can be reported in time $O(\log^2n + |\WVP(pq)|)$.
\end{theorem}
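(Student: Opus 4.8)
The plan is to combine the partial weak visibility machinery of Theorem~\ref{theorem:partial-visibility} with a balanced hierarchical decomposition of $\P$, so that any query $\WVP(pq)$ is assembled from $O(\log n)$ partial pieces, each reported output-sensitively. First I would recall that every simple polygon admits a diagonal splitting it into two sub-polygons with at most $2n/3$ vertices each~\cite{chazelle82}; applying this recursively yields a balanced binary triangulation tree $\mathcal{T}$ of depth $O(\log n)$ whose leaves are triangles and whose internal node $i$ owns a sub-polygon $P_i$, a diagonal $e_i$, and two children corresponding to $L_i$ and $R_i$. At each internal node I would store the two data structures of Theorem~\ref{theorem:partial-visibility}, one for $L_i$ and one for $R_i$ with respect to $e_i$, together with a global point-location structure on the union of all the partial visibility decompositions~\cite{kirkpatrick}.

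Next, the preprocessing bound. Since building the structure of Theorem~\ref{theorem:partial-visibility} on a sub-polygon with $m$ vertices costs $O(m^2)$ space and $O(m^2 \log m)$ time, the totals over $\mathcal{T}$ obey $S(n) = \max_{n/3 \le m \le 2n/3}\bigl(S(m)+S(n-m)\bigr)+\Theta(n^2)$ and the analogous recurrence for $T$ with the extra $\Theta(n^2\log n)$ term; because both are superlinear recurrences under the balance condition, they are dominated by the root term, giving $S(n)=\Theta(n^2)$ and $T(n)=\Theta(n^2\log n)$, and the point-location structure adds only $O(n^2)$ more.

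Then the query. I would first point-locate $p$ and $q$ among the leaves in $O(\log n)$ time. If they share a triangle, I build $\PWVP_i(pq)$ bottom-up from that leaf to the root: the leaf value is the triangle, and each inductive merge of the two children's partial polygons with the sibling's partial $\WVP$ costs $O(\log n)$ plus the size of the newly reported boundary, exactly as in~\cite{aronov}, yielding $O(\log^2 n + |\WVP(pq)|)$. If $p$ and $q$ lie in different leaves, I walk down from the root: at node $i$ with diagonal $e_i$, either $p_iq_i$ misses $e_i$ and lies entirely in, say, $R_i$, in which case I report $\PWVP_{L_i}(p_iq_i)$ by Theorem~\ref{theorem:partial-visibility} and recurse on $R_i$; or $p_iq_i$ crosses $e_i$ at $r_i$, in which case I write $\WVP(p_iq_i)$ as the union of $\PWVP_{L_i}(p_ir_i)$, the recursive $\WVP_{R_i}(p_ir_i)$, $\PWVP_{R_i}(r_iq_i)$, and the recursive $\WVP_{L_i}(r_iq_i)$. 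Each of the at most two recursive calls drops into a subtree of at most half the size, so the recursion depth remains $O(\log n)$; the query-time recurrence is $T(n,pq)=T(n/2,pq)+O(\log n)+|\PWVP_e(pq)|$ in the first case and $T(n,pq)=T(n/2,pr)+T(n/2,rq)+O(\log n)+|\PWVP_e(pr)|+|\PWVP_e(rq)|$ in the second, and an induction on $n$ — charging the per-level $O(\log n)$ terms to the $O(\log n)$ levels and each partial-polygon size to a distinct portion of the final output — closes to $T(n,pq)=O(\log^2 n + |\WVP(pq)|)$.

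The step I expect to be the main obstacle is the correctness of the four-way split in the crossing case: I must verify that a point of $L_i$ is weakly visible from $p_iq_i$ exactly when it is weakly visible from $p_ir_i$ \emph{through} $e_i$ (captured by $\PWVP_{L_i}(p_ir_i)$) or is visible from the sub-segment $r_iq_i$ that already lies in $L_i$, with no double counting beyond an $O(1)$ factor, so that the final merge costs $O(|\WVP_{P_i}(p_iq_i)|)$ and no more. This rests on the observation that any sightline from $pq$ reaching $L_i$ either stays inside $R_i$ until it pierces $e_i$ — hence is witnessed by the $R_i$-side piece $p_ir_i$ — or starts on the $L_i$ side, with the symmetric statement for $R_i$; pinning this down, together with the disjointness of the output pieces charged in the induction and the same-triangle base case, establishes the claimed $O(n^2\log n)$ time and $O(n^2)$ space preprocessing with $O(\log^2 n + |\WVP(pq)|)$ query time.
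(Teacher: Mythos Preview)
Your proposal is correct and follows essentially the same approach as the paper: balanced triangulation via Chazelle's splitting theorem, Theorem~\ref{theorem:partial-visibility} structures at each internal node, the same preprocessing recurrences, and the same query recursion that distinguishes the non-crossing and crossing cases with the four-way decomposition $\PWVP_{L_i}(p_ir_i)\cup \WVP_{R_i}(p_ir_i)\cup \PWVP_{R_i}(r_iq_i)\cup \WVP_{L_i}(r_iq_i)$. Your identification of the four-way split as the point requiring care, and your charging of the per-level $O(\log n)$ overhead against the $O(\log n)$ depth and of each partial piece against the final output, match the paper's induction exactly.
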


\section{Conclusion}

In this paper, we showed how to answer the weak visibility queries in a simple polygon
with $n$ vertices
in an efficient way. 
In the first part of the paper, 
we defined the partial weak visibility polygon $\WVP_e(pq)$
of a line segment $pq$ with respect to a diagonal $e$ and 
presented an algorithm to 
report it in time $O(\log n + |\WVP_e(pq)|)$, 
by spending $O(n^2 \log n)$ time to preprocess the polygon 
and maintaining a data structure of size $O(n^2)$. 

In the second part,
we presented a data structure of size $O(n^2)$ which can be computed in
time $O(n^2 \log n)$ so that the weak visibility polygon $\WVP(pq)$ from any query line segment $pq \in P$
can be reported in time $O(\log^2n  + |\WVP(pq)|)$.

\end{document}